\newtheorem{definition}{Definition}
\newtheorem{theorem}{Theorem}
\newtheorem{lemma}{Lemma}
\def\rot{\rotatebox}
\tikzstyle{vertex} = [fill,shape=circle,node distance=80pt]
\tikzstyle{edge} = [fill,opacity=.5,fill opacity=.5,line cap=round, line join=round, line width=50pt]
\tikzstyle{elabel} =  [fill,shape=circle,node distance=30pt]
\newcommand{\globallist}[2]{%
 \global\edef#1{#1#2}%
}
\tikzset{bary markings/.style = {
  decoration = {
   markings,
   mark = between positions 0 and 1 step .1 with
    {
     \edef\number{\pgfkeysvalueof{/pgf/decoration/mark info/sequence number}}
     \coordinate (r\number);
     \globallist\refpoints{r\number=1,}
    }
  },
  postaction = {decorate}
 }
}
\def\refpoints{}
\title{Estimating transmission from genetic and epidemiological data: a metric to compare transmission trees}
\author{Michelle Kendall \and Diepreye Ayabina \and Caroline Colijn} 
\date{}
\begin{document}

\maketitle

\begin{abstract}
Reconstructing who infected whom is a central challenge in analysing epidemiological data. Recently, advances in sequencing technology have led to increasing interest in Bayesian approaches to inferring who infected whom using genetic data from pathogens. The logic behind such approaches is that isolates that are nearly genetically identical are more likely to have been recently transmitted than those that are very different. A number of methods have been developed to perform this inference. However, testing their convergence, examining posterior sets of transmission trees and comparing methods' performance are challenged by the fact that the object of inference -- the transmission tree -- is a complicated discrete structure. We introduce a metric on transmission trees to quantify distances between them. The metric can accommodate trees with unsampled individuals, and highlights differences in the source case and in the number of infections per infector. We illustrate its performance on simple simulated scenarios and on posterior transmission trees from a TB outbreak. We find that the metric reveals where the posterior is sensitive to the priors, and where collections of trees are composed of distinct clusters. We use the metric to define median trees summarising these clusters. Quantitative tools to compare transmission trees to each other will be required for assessing MCMC convergence, exploring posterior trees and benchmarking diverse methods as this field continues to mature. 
\end{abstract}

%\keywords{infectious disease, genomics, Bayesian inference, epidemiology, modelling}

\section{Introduction} 

Understanding who infected whom is a key task of epidemiology. High quality reconstruction of who infected whom in an outbreak of an infectious disease allows public health workers to determine whether there are individuals or locations causing high numbers of transmission, to identify those individuals at risk, and to determine which individual characteristics are associated with infectiousness. Ultimately, this knowledge leads to improved infection control and outbreak management. However, outbreak reconstruction is time-consuming, expensive and uncertain.  It often must rely on individuals' recollections of those with whom they have had contact, as well as individual health records, locations in which infection may have spread, and so on. Particularly in the case of sexually transmitted infections and blood-borne infection, this information is sensitive and case identification is challenging. For chronic infections, transmission may have occurred a considerable time before diagnosis, making reconstructing transmission even more challenging. 

For these reasons and others, there is considerable interest in using genetic data from rapidly-evolving viruses and even bacteria in outbreak reconstructions. Recent advances in sequencing technology have meant that it is feasible to obtain whole-genome RNA or DNA sequences from pathogens even in real time during outbreaks~\cite{Quick2016-ix,Gardy2015-un}, and these data can be used to perform outbreak reconstructions, or to refine reconstructions based on traditional epidemiology. The central idea behind genomic approaches to outbreak reconstruction is that genetic polymorphisms in viruses or bacteria accrue even in the short time frame of the outbreak; by comparing cases' pathogen sequences, it is possible to refine estimates of who infected whom. For example, if cases A and B were in close contact at a time when A was infectious, epidemiological investigations alone would likely conclude that A infected B, but if the pathogen sequences are very different genetically, it would rule this out and another infector would be sought to explain B's infection. 

However, inference of transmission using genetic sequences is challenging. It relies not only on a knowledge of the likely time between an individual becoming infected and infecting others (the generation time), and on the likely time between becoming infected and seeking treatment (leading to being known to the health care system) -- this information is used in almost any reconstruction of transmission. Incorporating genetic data also requires a model of how mutations occur: at the time of transmission, or continuously throughout the life of the pathogen, and at what rate (clocklike evolution or a more general model). It requires, implicitly or explicitly, a model of the dynamics of the pathogen within and between hosts: is more than one lineage present, and how many pathogen particles are transmitted upon infection? Finally, it is rare that health authorities identify every case in an outbreak, and handling unknown cases raises additional challenges. Ideally, genetic information is integrated with epidemiological and clinical information to obtain the best possible estimates of who infected whom. 

Interest in the statistical tools necessary to solve these problems is growing rapidly, and diverse methods have been developed. These differ in their statistical approach: whether they have an explicit spatial structure; whether they allow multiple introductions of the pathogen into the community being analysed; whether they allow multiple distinct infections of individual hosts; how they handle the population dynamics of the pathogen in the host; whether they use a phylogenetic tree to capture relationships amongst the pathogen sequences; and how they handle the issue of unknown cases and cases without genetic data. Table~\ref{tab:compare} lists some of the available tools with respect to these variations. While there are a number of exemplars illustrating the relationship between genomic data and transmission (examples include~\cite{Walker2013-jn, Gray2011-et, Koser2012-gb, Gardy2011-vs}), we focus on Bayesian inference methods aiming to provide tools for use by the community. 

\begin{table}[htb]
\small
\centering
\begin{tabular}{llccccccc}
Name/author & \rot{70}{Ref.} & \rot{70}{Multi. intro.} & \rot{70}{Multi. sequences} & \rot{70}{In-host} & \rot{70}{Unsampled} & \rot{70}{Bottleneck $> 1$} &\rot{70}{Phylo} \\ \hline
Outbreaker &   \cite{Jombart2014-kx}      & Yes & No & Limited & Yes & No & Not required \\
TransPhylo &  \cite{Didelot2016-sw}     & No & Yes & Yes & Yes & No &Required \\
Lau et al.\ &   \cite{Lau2015-sf}      &Yes & No & No & Yes & No &Not required\\
SCOTTI &    \cite{De_Maio2016-qv}       &No & Yes& Yes & Yes & No &Required\\
Numinnen et al.\ &  \cite{Numminen2014-md}      &No &No &No &No & No & Required\\
Kenah et al.\ & \cite{Kenah2016-cl}       &No & Yes & Limited & No  & No &Required\\
Mollentze et al.\ &   \cite{Mollentze2014-rm}    &Yes & No &No  &Yes & No & Not required\\
Morelli et al.\ &   \cite{Morelli2012-kc}     &No & No&No & No& No & Not required \\
Soubeyrand et al.\ & \cite{Soubeyrand2016-us}      &No &No &No &Yes  & No & Not required \\
Hall et al.\ &  \cite{Hall2015-lc}    & No & Yes & Yes & No & No & Estimated\\
phybreak &  \cite{Klinkenberg2016-gg,Klinkenberg_undated-tf}     & No & Yes & Yes & No &  No & Estimated \\
Trepar \ & \cite{Stadler2013} & No & No &No & Yes &  No & Required \\
bitrugs &\cite{bitrugs, Worby2016-ev} & Yes & No &Yes &Yes & No  & Not required \\
Worby et al.\ & \cite{Worby2015-sc} &No & No & No & No & Yes & Not required
\\ \hline
\end{tabular}
  \caption{Some available methods for reconstructing transmission trees using genetic data. `Multi. intro' refers to whether the method accounts for multiple introductions of a pathogen into a community, distinguishing whether all cases are part of one outbreak or several smaller ones. `Multi. sequences' refers to whether the method allows for more than one sequenced isolate per case; often this does not mean multiple distinct infections (re-infection), but only monophyletic clonal instances. `In-host' refers to whether the method admits pathogen diversity within individual hosts; if yes, coalescent or branching events may not correspond to transmission events. `Unsampled' refers to whether there may be inferred cases that were not known to health authorities and not included in the dataset (in contrast to known cases without sequences). `Bottleneck $>1$' refers to whether pathogen diversity can be transmitted (if yes) or whether only one unique sequence is transmitted from case to case (if no). `Phylo' refers to whether a phylogenetic tree is required as an input, estimated alongside transmission, or not used. }
\label{tab:compare}
\end{table}

The data integration needs of this field motivate the use of Bayesian approaches, as they provide a natural framework for integration of covariates such as location, clinical indications of infectiousness and other variables, and avoid the need to use summary statistics of the data. However, by their nature, Bayesian approaches produce a posterior collection of inferred transmission trees alongside posterior distributions of scalar parameters. Understanding the nature of posterior uncertainty in a complex object such as a transmission tree is not straightforward. For example, do posterior estimates group into some trees in which case A is the source and B seeded a set of onward infections, versus trees in which case B is the source and D seeded those onward infections? Do the data support distinct alternative stories of the outbreak, or is the posterior unimodal in the space of transmission trees? Which transmission chains had more unsampled  cases? Typically, the fraction of correctly inferred infectors, or the fraction consistent with an external set of data, is used as a measure of the quality of inferred transmission trees. However, this does not capture `how wrong' the incorrect links are, and does not allow informative comparisons either within a posterior set of trees or of the performance of different methods. In addition, summarising the posterior is typically achieved using the Edmond's consensus tree~\cite{Gibbons1985,Didelot2014,Klinkenberg2016}: a consensus graph is constructed by finding the most common infector for each infectee, and then Edmond's algorithm is used to find the minimum directed spanning tree of this graph. It is therefore possible that such a consensus tree is different in structure from every tree in the posterior, particularly when the trees are quite varied. This limits the ability to effectively summarise the posterior.

Here we develop a metric on the space of transmission trees for a set of infected cases. It allows for unsampled individuals in transmission trees, and is also applicable to other kinds of tree structures. We illustrate the metric using random transmission trees with a simple structure, and find that the metric separates groups of transmission trees in an intuitive and meaningful way. We proceed to analyse posterior collections of transmission trees from a Bayesian inference of transmission from genetic data, and we illustrate how the metric allows us to understand posterior uncertainty and sensitivity to priors. 
Additionally, the metric provides a straightforward way to identify a representative median tree from a collection of trees. Such a median tree has advantages over consensus tree constructions because it is one of the trees from the original collection. 

\section{The metric}

We begin by defining what we mean by a transmission tree. We consider the case in which each individual is infected at most once. For many pathogens it is possible that cases are infected sequentially or even co-infected with different variants, but if this is observed in a set of data, we would denote the multiple infections as distinct, each with a unique infector. Note that we allow for the presence of \emph{unsampled} cases amongst the nodes, that is, individuals who were not known to the health care system during the data-gathering process, but whose presence in the transmission has been inferred.

\begin{definition}
A transmission tree $T=(N,E)$ is a directed graph with nodes $N$ and edges $E$, in which each node corresponds to an infected individual and edges correspond to transmission events. The set of nodes $N=S \cup U$, where $S$ is the set of sampled cases and $U$ is the (possibly empty) set of unsampled cases. A directed edge from node $n_i$ to $n_j$ implies that $n_i$ infected $n_j$. We say that $n_i$ is the `infector' and $n_j$ is the `infectee'. Each node has at most one infector. We require the graph to be connected.
\end{definition}

Since we do not allow for an infectee to have more than one infector, the graph $(N,E)$ has no cycles and is therefore a tree.

\begin{definition}
The \emph{source case} of an outbreak is the unique node in the transmission tree $T=(N,E)$ which has in-degree zero (no infector in $N$).
For any node $n_i \in N$ there is a unique path $p_i$ in $T$ from the source case along directed edges to $n_i$.

The \emph{depth} of node $n_i$ is the number of edges on the path $p_i$; the source case has depth zero.

The \emph{most recent common infector} (MRCI) of two nodes $n_i$ and $n_j$ is the node with the greatest depth which lies on both paths $p_i$ and $p_j$.
Note that if $n_i$ infected $n_j$, or more generally if $n_i$ lies on the path $p_j$, then their MRCI is $n_i$.
For convenience, the MRCI of $n_i$ and $n_i$ is also defined to be $n_i$.

The \emph{descendants} of $n_i$ are the nodes that can be reached following directed paths originating at $n_i$. 
\end{definition}

The requirements that the tree be connected and that each node have at most one infector imply that there is a unique source node, which reflects the fact that we are not modelling multiple distinct introductions of a pathogen into a community. 

\begin{definition}
For a transmission tree $T$ we define the tree vector 
$$
v(T)=(v_{1,1},v_{1,2},\dots,v_{|N|,|N|})
$$ 
where $v_{i,j}$ is the depth of the MRCI of $n_i$ and $n_j$ in $T$.
\end{definition}

To compare different transmission trees for the same infection we propose using the Euclidean distance between tree vectors, as in~\cite{Cardona2013,Kendall2016}. However, although the trees will contain the same set of sampled cases, $S=\{s_1,s_2,\dots,s_|S| \}$, the number of inferred unsampled cases $|U|$, and hence $|N|$, may differ between trees. 
Therefore, to ensure that we are comparing vectors of the same length, we restrict our attention to the vector of sampled cases, 
$$
v|_{S}(T)=(v_{s_1,s_1},v_{s_1,s_2},\dots,v_{s_{|S|},s_{|S|}}).
$$

In practice, we will often wish to compare trees with respect to transmission paths leading to sampled cases, ignoring sets of `trailing' unsampled cases with no sampled descendants. 
Indeed, many tree inference methods only include unsampled cases to make sense of historic infectors of sampled cases.
The tree vector of sampled cases respects this:

\begin{lemma}
\label{lemma:pruning}
Let $T=(N,E)$ be a transmission tree. 
Let $T^*=(N^*,E^*)$ be a copy of $T$, except that any unsampled cases in $T$ without infectees have been pruned (that is, the unsampled case node and its only incident edge removed), and this process repeated until each unsampled case has at least one sampled case somewhere amongst its descendants. 
Then $v|_{S}(T)=v|_{S}(T^*)$.
\end{lemma}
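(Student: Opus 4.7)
The plan is to show that each individual pruning step leaves the path from the source to any sampled case unchanged, and therefore leaves the depth of the MRCI of any two sampled cases unchanged. Iterating then gives the result.

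First I would set up a single pruning step. Suppose $u \in U$ is an unsampled case with no infectees, and let $T'$ be the tree obtained from $T$ by deleting $u$ and the edge into $u$. I would check that $T'$ is still a valid transmission tree: it remains connected (removing a leaf preserves connectivity), every non-source node still has a unique infector (only $u$'s infector-edge was removed, and $u$ is gone), and the source case is unchanged because $u$ cannot be the source of the original $T$ unless $T$ has a single node, which would contradict $u$ having no infectees in a tree with any sampled cases to consider (and in the degenerate case the claim is trivial).

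Next, for any sampled case $s \in S$, the path $p_s$ from the source to $s$ in $T$ does not pass through $u$: if it did, then $u$ would have a descendant on $p_s$, contradicting that $u$ has no infectees. Hence $p_s$ is still a path in $T'$, and because the source and all edges of $p_s$ are preserved, it is the unique path from the source to $s$ in $T'$. Consequently, for any two sampled cases $s_i, s_j$, the paths $p_{s_i}$ and $p_{s_j}$ are identical in $T$ and in $T'$, so their MRCI coincides as a node, and its depth (the number of edges from the source on the shared portion of these paths) is the same in both trees. This gives $v|_{S}(T) = v|_{S}(T')$.

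Finally, I would iterate: the lemma's pruning procedure is exactly a finite sequence of such single-leaf deletions, since at each stage we remove an unsampled case that currently has no infectees, and we stop only when every remaining unsampled case has a sampled descendant. Applying the single-step invariance along the sequence, $v|_S$ is preserved throughout, yielding $v|_{S}(T) = v|_{S}(T^*)$. I do not expect a serious obstacle here; the only point requiring care is arguing that pruned nodes never lie on paths from the source to a sampled case, which follows immediately from the leaf condition together with uniqueness of infectors.
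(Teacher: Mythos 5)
Your proof is correct and follows essentially the same approach as the paper: both rest on the observation that a pruned unsampled case cannot lie on any path from the source to a sampled case, so all depths recorded in $v|_S$ are unchanged. The only difference is presentational --- you decompose the pruning into single leaf deletions and iterate, while the paper argues in one step about all unsampled cases without sampled descendants --- and your extra care with connectivity and the source case is sound but not a new idea.
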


\begin{proof}
The vector $v|_{S}$ records the depths of sampled cases (the $v_{s_i,s_i}$ entries, where $s_i \in S$) and the depths of MRCIs of pairs of sampled cases.
Recall that by `depth' we mean the number of edges (equivalently, the number of nodes minus one) on the unique path from the source case to the node in question. 
Consider an unsampled case $u$ with no sampled case descendants.
Since its removal would not shorten any path between the source case and a sampled node, or even between the source case and the MRCI of any pair of sampled nodes, its existence and position are entirely masked from $v|_S$.
Thus each entry of $v|_{S}(T)$ is unchanged by the pruning of unsampled cases without sampled descendants, and so $v|_{S}(T)=v|_{S}(T^*)$.
\end{proof}

Since we are interested in comparing transmission trees, it is important to establish when we consider two trees to be equivalent. In particular, since the labels of the sampled cases are key to understanding the transmission process, it is important to distinguish between a tree where `case 1 infected case 2' and a tree where `case 2 infected case 1'. However, since any numbering of unsampled cases is arbitrary, the labels of unsampled cases may be safely ignored. 
We will use the following definition:

\begin{definition} 
Consider two transmission trees $T_1 = (N_1=S \cup U_1,E_1)$ and $T_2=(N_2=S \cup U_2,E_2)$, where the set of sampled cases $S$ is the same in each tree.
Let $T_1^*= (N_1^*=S \cup U_1^*,E_1^*)$ and $T_2^*=(N_2^*=S \cup U_2^*,E_2^*)$ be copies of $T_1$ and $T_2$ respectively, but pruned so that every unsampled case has at least one sampled case amongst its descendants, as in Lemma~\ref{lemma:pruning}.

We say that $T_1$ and $T_2$ are \emph{S-isomorphic} if there is an $S$-label-preserving isomorphism from $T_1^*$ to $T_2^*$, that is, a bijective function $\phi: N_1^* \rightarrow N_2^*$ such that $\phi$ is the identity on $S$: 
$$
\phi(s_i)=s_i \text{ for all } s_i \in S
$$ 
and unpruned edges are preserved: 
$$
(n_i,n_j) \in E_1^* \Leftrightarrow (\phi(n_i),\phi(n_j)) \in E_2^* \enspace .
$$ 
\end{definition}

As an example, the two trees in Figure~\ref{fig:ttree_isomorphic} are $S$-isomorphic: arbitrary differences in labelling of unsampled cases $u_1, u_2, \dots$ will not affect our measure of tree difference, nor will the presence of unsampled cases with no sampled descendants. 

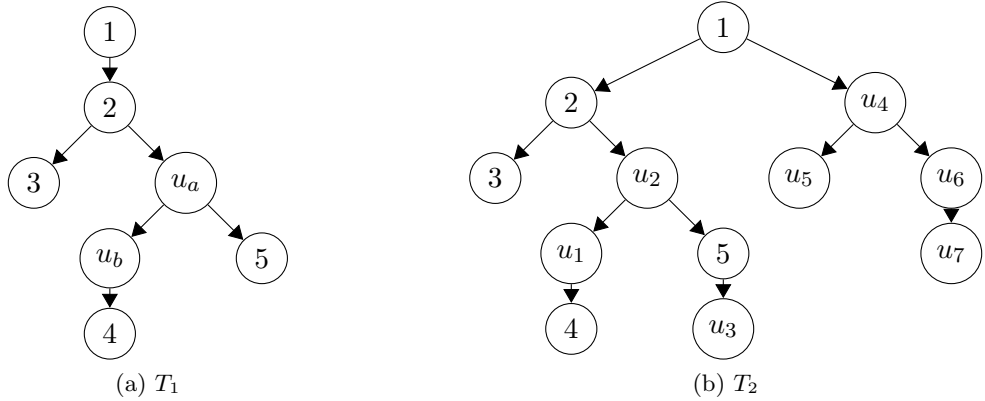
\begin{figure}[htb]
\begin{center}
\subfloat[$T_1$]{
\begin{tikzpicture}[>=triangle 60]

\tikzstyle{every node}=[draw,shape=circle];

\path (2, 4) node (1) {1};
\path (2, 3) node (2) {2};
\path (1, 2) node (3) {3};  
\path (3, 2) node (u2) {$u_a$};  
\path (2, 1) node (u1) {$u_b$}; 
\path (4, 1) node (5) {5};
\path (2, 0) node (4) {4};

\draw [->] (1) -- (2);
\draw [->] (2) -- (3);
\draw [->] (2) -- (u2);
\draw [->] (u2) -- (u1);
\draw [->] (u2) -- (5);
\draw [->] (u1) -- (4);

\end{tikzpicture}
\label{fig:isomorphic1}
}
\hspace{2cm}
\subfloat[$T_2$]{
\begin{tikzpicture}[>=triangle 60]

\tikzstyle{every node}=[draw,shape=circle];

\path (4, 4) node (1) {1};
\path (2, 3) node (2) {2};
\path (1, 2) node (3) {3};  
\path (3, 2) node (u2) {$u_2$};  
\path (2, 1) node (u1) {$u_1$}; 
\path (4, 1) node (5) {5};
\path (2, 0) node (4) {4};
\path (4, 0) node (u3) {$u_3$};
\path (6, 3) node (u4) {$u_4$};
\path (5, 2) node (u5) {$u_5$};
\path (7, 2) node (u6) {$u_6$};
\path (7, 1) node (u7) {$u_7$};

\draw [->] (1) -- (2);
\draw [->] (2) -- (3);
\draw [->] (2) -- (u2);
\draw [->] (u2) -- (u1);
\draw [->] (u2) -- (5);
\draw [->] (5) -- (u3);
\draw [->] (u1) -- (4);
\draw [->] (1) -- (u4);
\draw [->] (u4) -- (u5);
\draw [->] (u4) -- (u6);
\draw [->] (u6) -- (u7);

\end{tikzpicture}
\label{fig:isomorphic2}
}
\end{center}
\caption{$T_1$ and $T_2$ are $S$-isomorphic because $T_2$ will be the same as $T_1$ (up to the relabelling of unsampled cases) after pruning the unsampled cases with no sampled descendants. 
Explicitly, we are using the bijection $\phi: N_1^* \rightarrow N_2^*$ where $\phi(s_i)=s_i$ for $s_i \in S=\{1,2,3,4,5\}$ and $\phi(u_a)=u_2$, $\phi(u_b)=u_1$.} 
\label{fig:ttree_isomorphic}
\end{figure}

\begin{theorem}
Let $S$ be a set of sampled cases and $\mathcal{T}$ a set of transmission trees, each of whose set of nodes contains the set $S$.
Then for any $T_1, T_2 \in \mathcal{T}$, the Euclidean distance between tree vectors:
$$
d(T_1,T_2) = || v|_{S}(T_1) - v|_{S}(T_2) ||
$$ 
is a metric on $\mathcal{T}$ up to $S$-isomorphism.
\end{theorem}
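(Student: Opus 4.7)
The map $T \mapsto v|_{S}(T)$ embeds each transmission tree as a point in a Euclidean space of dimension $|S|^2$, and $d(T_1,T_2)$ is just the Euclidean distance between the corresponding points. Non-negativity, symmetry, and the triangle inequality are therefore inherited immediately from the Euclidean norm. The only axiom that requires work is the identity of indiscernibles, which in the present setting reads $d(T_1,T_2)=0$ if and only if $T_1$ and $T_2$ are $S$-isomorphic. I would handle the two directions separately.

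For the easy direction, assume an $S$-label-preserving bijection $\phi : T_1^* \to T_2^*$ exists. Because $\phi$ fixes every sampled label and preserves the edge set, it preserves directed paths from the source, so the depth of each sampled case and the depth of the MRCI of each pair of sampled cases agree in $T_1^*$ and $T_2^*$. Combined with Lemma~\ref{lemma:pruning}, which says that $v|_{S}$ is unchanged by the pruning operation, this gives $v|_{S}(T_1) = v|_{S}(T_1^*) = v|_{S}(T_2^*) = v|_{S}(T_2)$ and hence $d(T_1,T_2)=0$.

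The converse is the main obstacle, and the plan is to show that the pruned tree $T^*$ can be recovered from $v|_{S}(T)$ up to relabelling of its unsampled nodes. The diagonal entries pin down the depth of each sampled node, and the off-diagonal entries locate sampled ancestors via the criterion that $s_i$ is an ancestor of $s_j$ exactly when $v_{s_i,s_j}=v_{s_i,s_i}$. An entry $v_{s_i,s_j}$ whose MRCI is not any sampled case must correspond to an unsampled node at depth $v_{s_i,s_j}$, and two such entries identify the same unsampled node precisely when their MRCI-depth data are mutually consistent; this induces an equivalence relation whose classes enumerate the unsampled branching nodes of $T^*$. Once these branching nodes are in place alongside the sampled ones, the remaining unsampled positions form chains whose lengths equal the depth gap between each branching or sampled node and its nearest branching or sampled ancestor, so these lengths are forced as well. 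Two pruned trees realising the same $v|_{S}$ therefore share identical sampled labels at identical depths, the same branching skeleton, and the same chain lengths along every edge of that skeleton; any bijection sending chain-interior unsampled nodes to their counterparts is then an $S$-isomorphism. The bulk of the work is the combinatorial bookkeeping: making precise when two MRCI entries refer to the same unsampled node, and verifying that chains of degree-two unsampled nodes carry no additional structural freedom beyond their recorded length.
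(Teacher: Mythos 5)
Your proposal is correct in outline and, for the hard direction, takes a genuinely different route from the paper's. The paper also argues by reconstruction, but organises it differently: it first treats the fully-sampled case (place each node at the depth given by $v_{i,i}$, then read an edge off each pair of nodes at consecutive depths whose entry $v_{i,j}$ equals the shallower depth), and then handles unsampled cases by walking down each path from the source to a sampled node and showing that at every depth the node on that path is either the same sampled case in both trees or unsampled in both. Your skeleton-plus-chains decomposition instead identifies the unsampled branching nodes explicitly, as equivalence classes of MRCI entries, which is in some ways more honest: the paper's final step of gluing its per-path correspondences into a single bijection $\phi$ silently relies on exactly the fact you isolate, namely that the vector determines when unsampled nodes lying on two different root-to-sample paths coincide. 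The step you defer as ``combinatorial bookkeeping'' is indeed the crux and must be written down, but it does go through: two entries $v_{s_i,s_j}$ and $v_{s_k,s_l}$ with common value $\delta$, neither of whose MRCIs is a sampled case, name the same unsampled node if and only if $v_{s_a,s_b}\geq\delta$ for every cross pair $s_a\in\{s_i,s_j\}$, $s_b\in\{s_k,s_l\}$ (if the MRCIs coincide all four cases descend from that node, so every pairwise MRCI is at least as deep; if they are distinct nodes at the same depth their own MRCI is strictly shallower, and it equals the MRCI of any cross pair), and transitivity is automatic because the condition characterises equality of the two MRCIs. One small amendment: the chain of unsampled cases above the shallowest skeleton node has no ``branching or sampled ancestor'' to measure against, but its length is still forced, being that node's depth. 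What your approach buys is an explicit algorithm recovering $T^*$, including its unsampled structure, from $v|_S$; what the paper's buys is a shorter argument in the fully-sampled case and the matrix-to-tree picture of Figure~\ref{fig:MatrixToTree}.
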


\begin{proof}
The Euclidean distance between vectors is symmetric, non-negative and satisfies the triangle inequality.
To prove that $d$ is a metric we need to show that $d(T_1,T_2)=0$ if and only if $T_1$ and $T_2$ are $S$-isomorphic.

Since the vectors are well-defined and are not conditional on the labelling of unsampled cases, and by Lemma~\ref{lemma:pruning}, we know that when $T_1$ and $T_2$ are $S$-isomorphic then $v|_S(T_1)=v|_S(T_2)$.
It remains to show that $v|_S(T_1)=v|_S(T_2)$ implies that $T_1$ and $T_2$ are $S$-isomorphic.
The proof follows fairly naturally from results in~\cite{Cardona2013} and~\cite{Kendall2016}. 
Here we provide a proof which also supplies some intuition for an algorithm for reconstructing the transmission tree $T$ from the tree vector $v|_S(T)$. 

Let $T_1=(N_1,E_1),T_2=(N_2,E_2) \in \mathcal{T}$ be trees on a set of sampled cases $S$, and suppose that $v|_S(T_1)=v|_S(T_2)$.
First, we consider the simpler case where there are no unsampled nodes in either tree, so $N_1=N_2=S$.
We consider the identity bijection $\phi:N_1 \rightarrow N_2$ with $\phi(n_i)=n_i$ for all $i \in S=N_1=N_2$. 
To show that $T_1$ and $T_2$ are $S$-isomorphic we must show that $\phi$ preserves all edges so that $E_1 = E_2$.

The unique node $n_0$ with $v_{0,0}(T_1)=0$ is the source case in $T_1$, and for each $i \in N_1$, the value $v_{i,i}(T_1)$ gives the depth of node $n_i$ in $T_1$; similarly for $T_2$.
Thus $v|_S(T_1)=v|_S(T_2)$ implies that $T_1$ and $T_2$ have the same source case and that each sampled node is found at the same depth in both trees.
We begin to see how the vector $v|_S(T_1)$ can be used to construct $T_1$: for each depth $\delta$, we can make a list of the nodes at that depth (nodes $n_j$ which satisfy $v_{j,j}(T_1)=\delta$).
In this way we can start to draw our transmission tree as in Figure~\ref{fig:MatrixToTree_Tree1}, where nodes are at the correct depths but directed edges are yet to be placed. 

Now for every $n_i, n_j \in N_1$, there is an edge $(n_i,n_j) \in E_1$ precisely when $n_i$ and $n_j$ are at consecutive depths (without loss of generality, say $n_i$ is at depth $\delta$ and $n_j$ is at depth $\delta + 1$) AND $v_{i,j}(T_1)=\delta$, since this means that $n_i$ is the infector of $n_j$.
Since $v_S(T_1)=v_S(T_2)$, we have that $v_{i,j}(T_1)=v_{i,j}(T_2)$ for all $n_i,n_j \in S$, and so $(n_i,n_j) \in E_1$ if and only if $(\phi(n_i),\phi(n_j))=(n_i,n_j) \in E_2$.
Thus $E_1=E_2$ and $T_1$ and $T_2$ are $S$-isomorphic.

\begin{figure}[htb]
\centering
\subfloat[][]{
\parbox{0.35\textwidth}{
\centering
\vspace{-3cm}
$ M =
\begin{bmatrix}
0 &   &   &   &   &   &   \\
{\color{red} 0}  & 1 &   &   &   &   &   \\
{\color{blue} 0}  & {\color{red} 1}  & 2 &   &   &   &   \\
{\color{blue} 0}  & {\color{red} 1}  & {\color{blue} 1}  & 2 &   &   &   \\
{\color{blue} 0}  & {\color{blue} 1}  & {\color{red} 2}  & {\color{red} 1}  & 3 &   &   \\
{\color{blue} 0}  & {\color{blue} 1}  & {\color{red} 1} & {\color{red} 2}  & {\color{blue} 1}  & 3 &   \\
{\color{blue} 0}  & {\color{blue} 1}  & {\color{red} 1}  & {\color{red} 2}  & {\color{blue} 1}  & {\color{blue} 2}  & 3  \\
\end{bmatrix}
$
}
\label{fig:MatrixToTree_Mat}
} 
\hspace{1cm}
\subfloat[]{
\begin{tikzpicture}[>=triangle 60]

\tikzstyle{every node}=[draw,shape=circle];

\path (2, 4) node (1) {1};
\path (2, 3) node (2) {2};
\path (1.5, 2) node (3) {4};  
\path (2.5, 2) node (4) {3};  
\path (1, 1) node (5) {5};
\path (2, 1) node (6) {6};
\path (3, 1) node (7) {7};

\end{tikzpicture}
\label{fig:MatrixToTree_Tree1}
}
\hspace{1cm}
\subfloat[]{
\begin{tikzpicture}[>=triangle 60]

\tikzstyle{every node}=[draw,shape=circle];

\path (2, 4) node (1) {1};
\path (2, 3) node (2) {2};
\path (1.5, 2) node (3) {3};  
\path (2.5, 2) node (4) {4};  
\path (1, 1) node (5) {5};
\path (2, 1) node (6) {6};
\path (3, 1) node (7) {7};

\draw [->] (1) -- (2);
\draw [->] (2) -- (3);
\draw [->] (2) -- (4);
\draw [->] (3) -- (5);
\draw [->] (4) -- (6);
\draw [->] (4) -- (7);

\end{tikzpicture}
\label{fig:MatrixToTree_Tree2}
}
\caption{For ease of visual notation, we have written the vector $v$ here as a matrix $M$, where $M_{i,j}=v_{i,j}$, and omitted the upper triangle of the matrix because $M$ is symmetric.
The $v_{i,i}$ entries, shown in (a) as the black, diagonal entries of $M$, determine the depths of the nodes in the transmission tree.
We place each node at its appropriate depth (b). 
Transmissions (directed edges) will be placed to point downwards, from one depth to the next. 
It then remains to check the (red) entries of $M$ corresponding to pairs of nodes at consecutive depths, in order to place the edges in the tree.
To draw the transmission tree as a planar graph it may be desirable to rearrange the order of the nodes at each depth; here, we have swapped the order of nodes 3 and 4.
Blue entries of $M$ are not required for this tree reconstruction.}
\label{fig:MatrixToTree}
\end{figure}

Now suppose that $S$ is a strict subset of $N_1,N_2$ (there are some unsampled cases in each tree). 
By Lemma~\ref{lemma:pruning} we know that if there are any unsampled cases in $T_1$ and/or $T_2$ without sampled descendants, then these will not affect the vectors $v|_{S}(T_1), v|_{S}(T_2)$. It remains to show that there is a bijective function $\phi:N_1^* \rightarrow N_2^*$ such that $\phi$ is the identity on $S$ and $(n_i,n_j)\in E_1^*$ if and only if $\phi(n_i,n_j)\in E_2^*$.

If the source case in $T_1$ is an unsampled case then $v_{s_i,s_i}(T_1) > 0 $ for all $s_i \in S$. 
Since $v|_S(T_1)=v|_S(T_2)$ we also have $v_{s_i,s_i}(T_2) > 0 $ for all $s_i \in S$, and so the source case is unsampled in $T_2$ also.
From the first part of the proof we know that any subtree (a connected subset of nodes) of sampled cases $\hat{S} \subseteq S$ which includes the source case must give rise to a unique vector $v|{\hat{s}}$, so that all node depths and edges are determined. 
By extension, any subtree $T|_{\hat{S}}$ of sampled cases $\hat{S}$ whose minimum depth in $T$ is $\delta$ must also be uniquely determined by $v|_{\hat{s}}$, since $v|_{\hat{s}}(T) = v|_{\hat{s}}(T|_{\hat{S}}) + \delta$.
Therefore we know that the identity map $\phi:S \rightarrow S$ preserves all edges within subtrees of sampled cases: for all $s_i,s_j \in S$, $(s_i,s_j) \in E_1^*$ if and only if $(s_i,s_j) \in E_2^*$.

It remains to show that for any path in $T_1$ from the source to a sampled case, an $S$-isomorphic path exists in $T_2$ (a path can be considered as a tree so we are continuing to use the same definition of $S$-isomorphism).
By definition, the path $p_i$ from the source to a sampled node $n_i \in S$ at depth $\delta$ contains a single node at each depth $1,2,\dots,\delta$, and recall that each sampled node has the same depth in $T_1$ and $T_2$. 

Fix a sampled node $n_i$ at depth $\delta \geq 0$ and consider the path to it from the source case in each tree, $p_i(T_1)$ and $p_i(T_2)$ in $T_1$ and $T_2$ respectively.
Consider a depth $x \in \{0,\dots,\delta\}$ and find the node $n_a$ at depth $x$ on $p_i(T_1)$. 
If $n_a$ is a sampled node then $v_{a,a}(T_1)=x=v_{a,a}(T_2)$ and $v_{a,i}(T_1)=x=v_{a,i}(T_2)$, so the same sampled node $n_a$ also appears at depth $x$ on path $p_i(T_2)$.
Now suppose that $n_a$ is an unsampled node in $T_1$, that is, $n_a \in U_1^*$.
Since there is exactly one node at depth $x$ in $T_1$ which has $n_i$ amongst its descendants, and since this node $n_a$ is unsampled, then there can be no sampled node $n_b \in S$ such that both $v_{b,i}(T_1)=x$ and $v_{b,b}(T_1)=x$.
Since the vectors are equal, there can be no node $n_c \in S$ such that both $v_{c,i}(T_2)=x$ and $v_{c,c}(T_2)=x$, and so the node at depth $x$ on path $p_i(T_2)$ is unsampled also. 

Thus each edge $(n_a,n_b)$ on path $p_i(T_1)$ is in $E_1^*$ and has a corresponding edge $(\phi(n_a),\phi(n_b))$ on path $p_i(T_2)$ in $E_2^*$, where $n_a \in S$ if and only if $\phi(n_a)=n_a \in S$, and $n_a \in U_1^*$ if and only if $n_a \in U_2^*$; similarly for $n_b$.
Since this is true for every path from the source to a sampled node, we have shown that $v|_S(T_1) = v|_S(T_2)$ implies that all such paths are $S$-isomorphic in $T_1$ and $T_2$, hence $T_1$ and $T_2$ are $S$-isomorphic.
\end{proof}

Note that this proof illustrates that many of the entries of $v$ are redundant for the reconstruction of the tree, particularly when all cases are sampled, in which case we can ignore any entries $v_{a,b}$ where $n_a$ and $n_b$ are not at consecutive depths. 
In Figure~\ref{fig:MatrixToTree} we only need the diagonal and red entries of $M$ to construct the tree. 
In fact, since we know that the graph is a tree, wherever there are two red entries in a row, only one red entry is strictly necessary in this example for the placement of edges.
Nevertheless, further (blue) entries are needed to understand the relationships across multiple depths when there are unsampled cases, and these `extra' entries also add weight in the comparison of transmission trees and may be useful if this metric was extended to include edge weights.

The existence of a metric on a set of objects enables a variety of further analyses to be performed. 
These include: visualising the pairwise distances between the objects using projections such as multi-dimensional scaling (MDS)~\cite{Cox2000} and cluster analysis, as proposed in the related literature of phylogenetic tree comparison~\cite{Amenta2002,Hillis2005,Holmes2006,Chakerian2012,Berglund2011,Kendall2016}.
Although the metric we have proposed is not convex, barycentric methods can be used to find a representative `central' tree from a set, for example we can find the geometric median tree as proposed in~\cite{Kendall2016}.

Such methods may be used to compare trees: from different input data, taking into account various combinations of metadata; from different inference processes, with variations in their assumptions and settings; and within the same inference process, for example to assess convergence within a Bayesian posterior. Projecting tree-tree distances into two or three dimensions, assessing clustering and finding representative tree(s) can be important for assessing and summarising the performance of inference processes. Additionally, each tree can be compared to a fixed reference tree, for example to assess the success of an inference process in reconstructing the `true' tree from a simulation, or to estimate the effective sample size of discrete tree structures as proposed in~\cite{Lanfear2016}.

\section{Results}

\subsection{Toy examples} 
The metric which we have proposed here detects any differences between trees.
In particular, it highlights differences in the `shape' of the transmission tree (star-like versus single transmission chain, etc.), corresponding to different transmission dynamics. The shape and depth of the tree is largely determined by the number of infectees per infector.
The measure also highlights differences in the attribution of the source case (and in general, differences in historic transmissions are given more emphasis than recent transmission differences). The distance between two trees also depends on the number and relative positions of unsampled cases.

 We tested how well the metric resolves some of these differences using small examples. For each of the following scenarios we generated 1000 transmission trees at random from the set of trees with the given constraints.
We then applied the metric to find the pairwise distances between them, and projected the distances into a two-dimensional plot using MDS.
We use colours and shapes in the plots to highlight key differences between the trees, and to see where these colours do or do not correspond to position in the MDS.
For each scenario we picked the number of infected cases to be small enough so that it was easy to plot and examine the individual trees by eye, and for it to be possible to take a reasonably large sample from the set of all transmission trees of that size, but large enough for there to be a variety of possible tree structures within the given constraints.

\subsubsection{Scenario 1}

For the first scenario we generated random transmission trees under the following constraints: we had exactly eleven sampled cases and no unsampled cases. Each infector was constrained to infect exactly two cases (a binary tree), and the source case was fixed as case 1. 
Under these constraints there are precisely six possible tree `shapes', each admitting a variety of possible transmission trees through the permutation of the remaining ten case labels. 
The key variation in the MDS plot is associated with the height / shape of the tree: in Figure~\ref{fig:scenario1_MDS} we have coloured each point according to the mean value of its tree vector $v$, i.e. the mean of the MRCI depths in the tree. 
Some example trees are also shown: Figure~\ref{fig:scenario1_max} is a tree with the maximum possible depth (mean of $v \approx 1.4$) and Figure~\ref{fig:scenario1_min} is a tree with the minimum possible depth (mean of $v \approx 0.6$), given the above constraints. The metric distinguishes trees composed of one long transmission chain in which each infection gives rise to only one onward-infecting case (and one case who does not infect anyone else), as opposed to more heterogeneous transmission trees in which some individuals cause two onward \emph{infectious} cases. 

\begin{figure}[htb]
\centering
\subfloat[]{
\includegraphics[height=4cm]{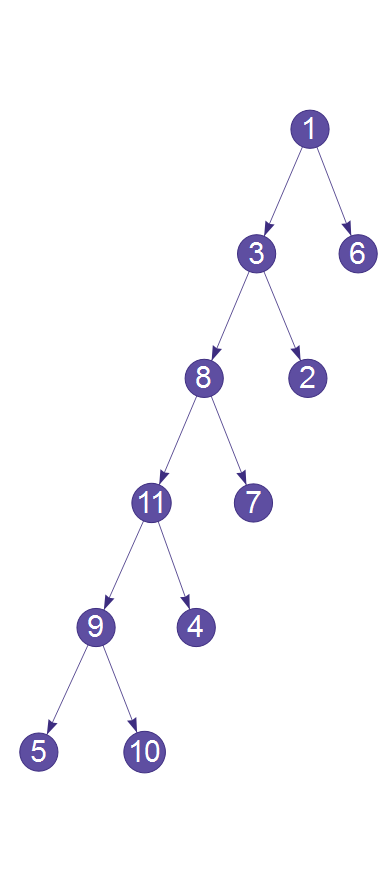}
\label{fig:scenario1_max}
}
\hspace{0.5cm}
\subfloat[]{
\includegraphics[height=4cm]{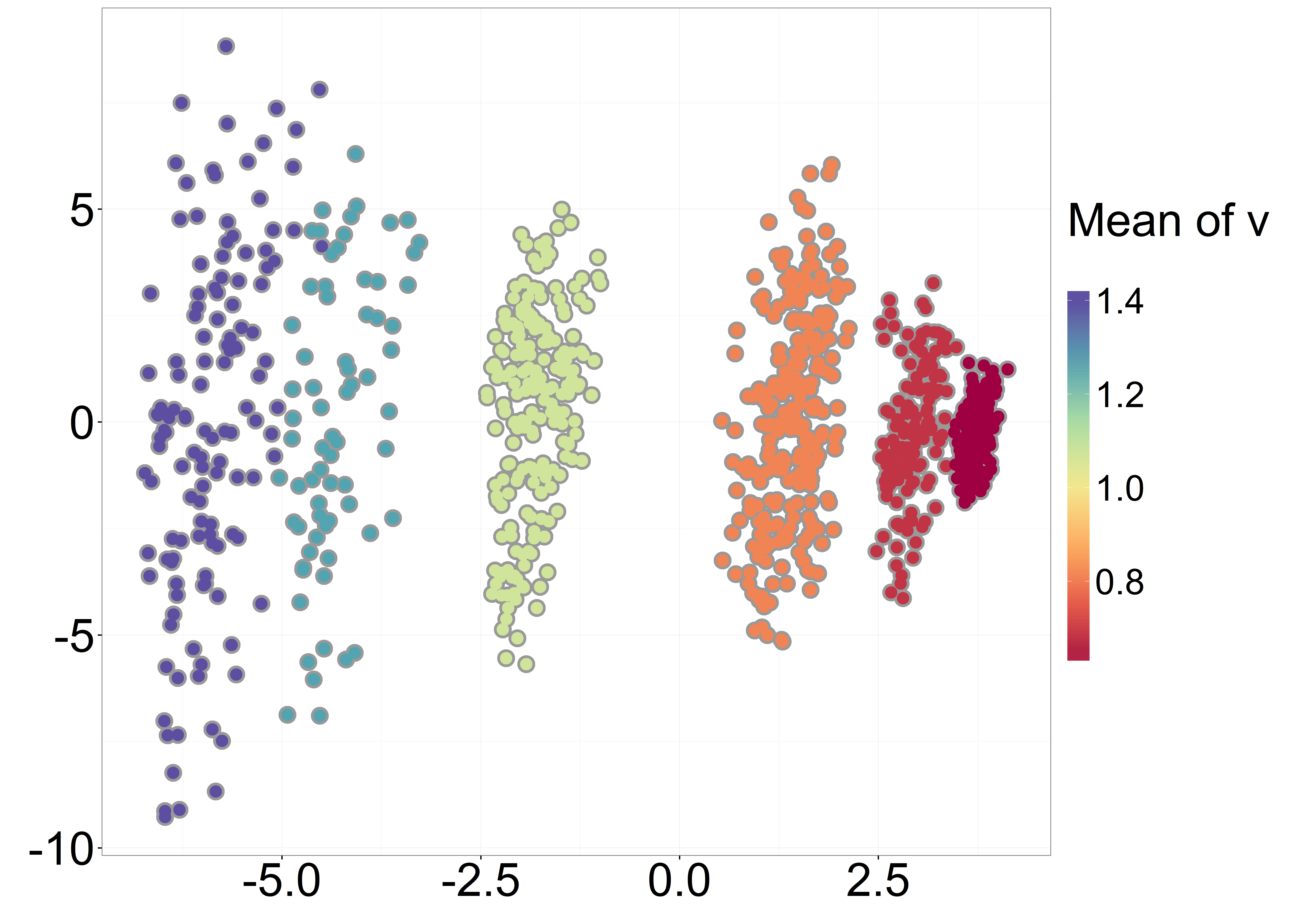}
\label{fig:scenario1_MDS}
}
\subfloat[]{
\includegraphics[height=4cm]{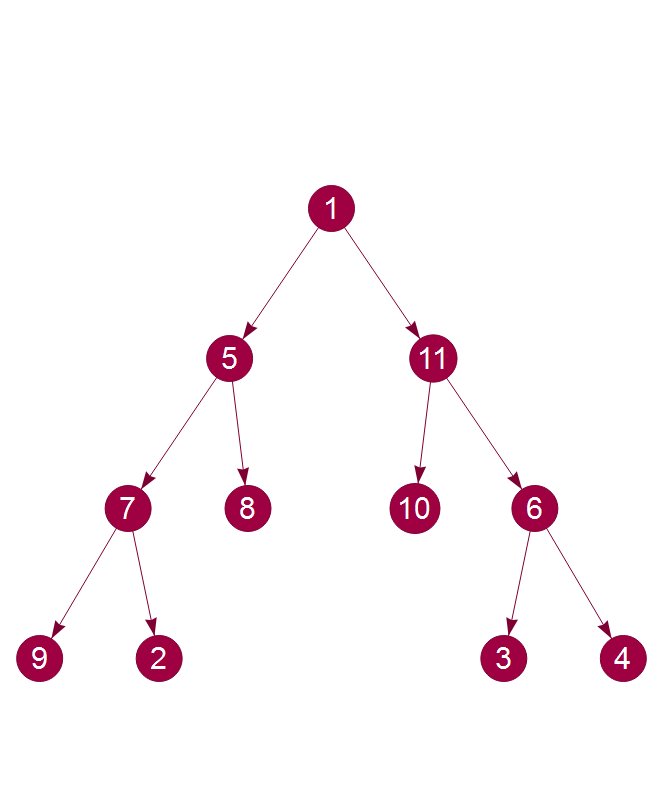}
\label{fig:scenario1_min}
}
\caption{Scenario 1: eleven sampled cases, no unsampled cases, each infector infects exactly two cases, source case fixed as case 1.}
\label{fig:scenario1}
\end{figure}

\subsubsection{Scenario 2}

Our second scenario is similar to the first (eleven sampled cases, no unsampled cases, each infector infects exactly two cases), but now we fix the source case as case 1 in half the trees, and case 2 in the other half.
The resulting MDS plot is shown in Figure~\ref{fig:scenario2}. 
The symmetry in the plot corresponds to the choice of source case (indicated by the shape of each point), and we see that tree shape (shown by colour) is still a discriminating factor. The ability to identify the source case is important for outbreak analysis, and different source cases are likely to correspond to substantially different overall stories of who infected whom.  

\begin{figure}[htb]
\centering
\includegraphics[height=4cm]{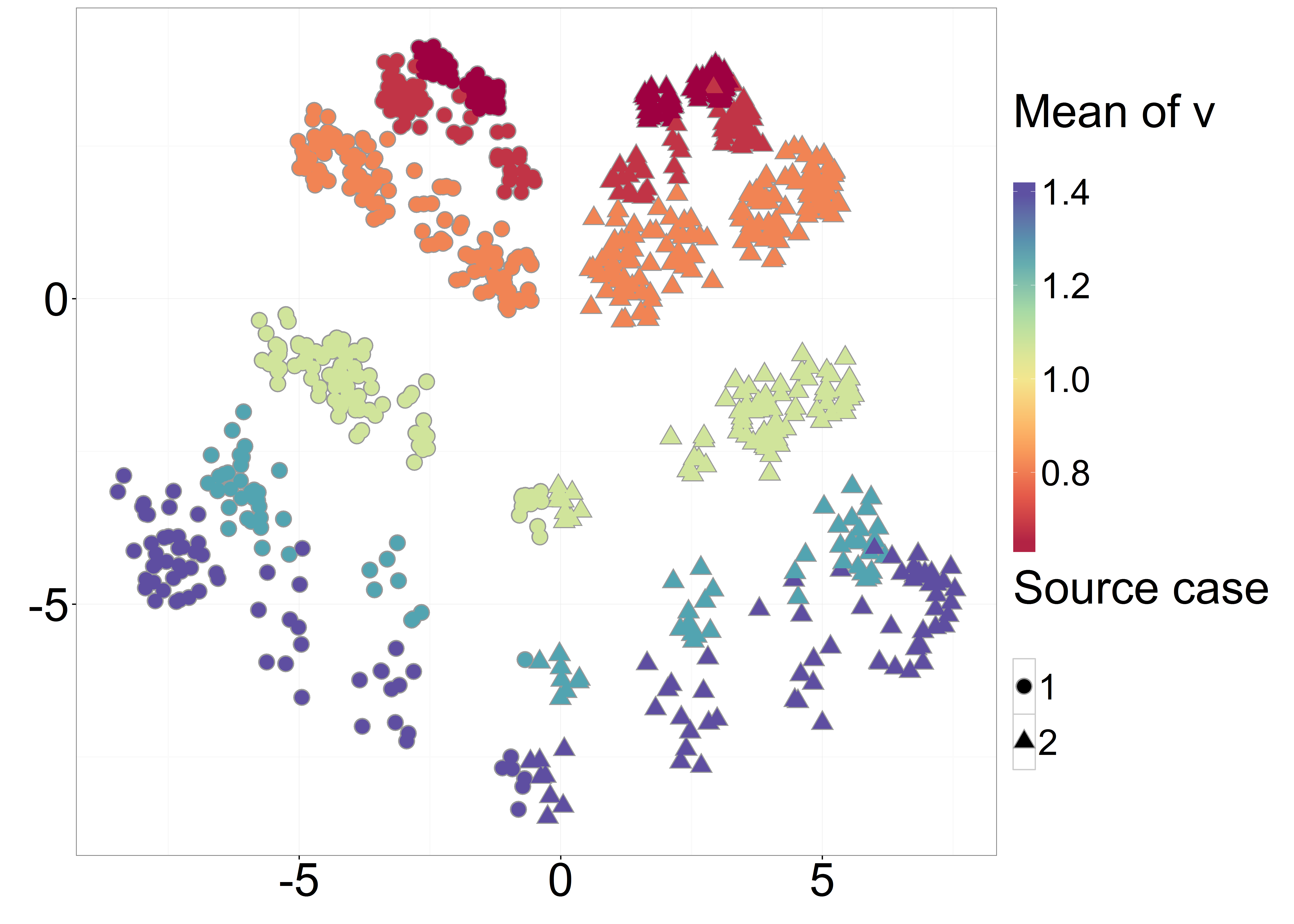}
\caption{Scenario 2: similar to Scenario 1, except half have the source case fixed as case 1, the other half have the source case fixed as case 2. }
\label{fig:scenario2}
\end{figure}

\subsubsection{Scenario 3}

We now reduce the constraint on the number of infectees. For our third scenario, each infector infects $n$ cases, where $n$ is picked uniformly at random from $\{1,2,3\}$, per tree.
Each tree has thirteen sampled cases and no unsampled cases. The source case is picked uniformly at random from $\{1,\dots,6\}$ (for ease of identification by colour in the MDS plots) and the remaining case labels are determined by a random permutation. 
The overwhelming grouping on the first two axes (Figure~\ref{fig:scenario3_axes12}) is by the number of infectees per infector.
In particular, the transmission trees where each infector has one infectee, which are simple chains, are strongly separated from the other trees and are more widely spread in the plot. 
This is because the large number of possible permutations of their labels lead to greater differences in transmission histories than in the shorter, more balanced trees where each infector causes two or three new infections.
There is still some noticeable separation by source case, which becomes much more apparent in a plot of the second and third axes (Figure~\ref{fig:scenario3_axes23}).
This underlines the findings of Scenarios 1 and 2 by showing that the metric distinguishes trees by transmission dynamics and source case attribution, but with rather more emphasis on the former when everything else is fixed.

\begin{figure}[htb]
\centering
\subfloat[MDS plot of axes 1 and 2]{
\includegraphics[height=4cm]{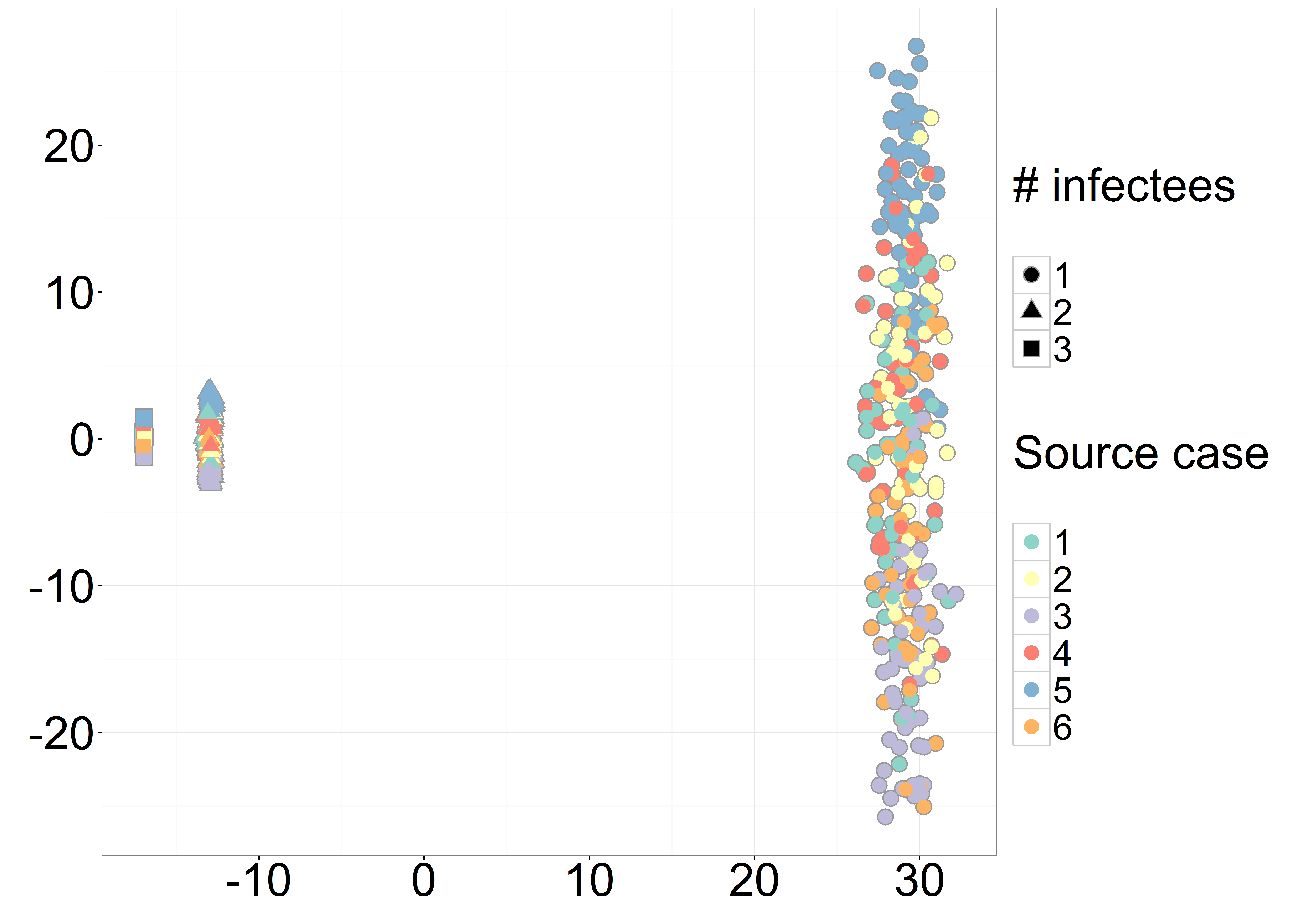}
\label{fig:scenario3_axes12}
}
\hspace{1cm}
\subfloat[MDS plot of axes 2 and 3]{
\includegraphics[height=4cm]{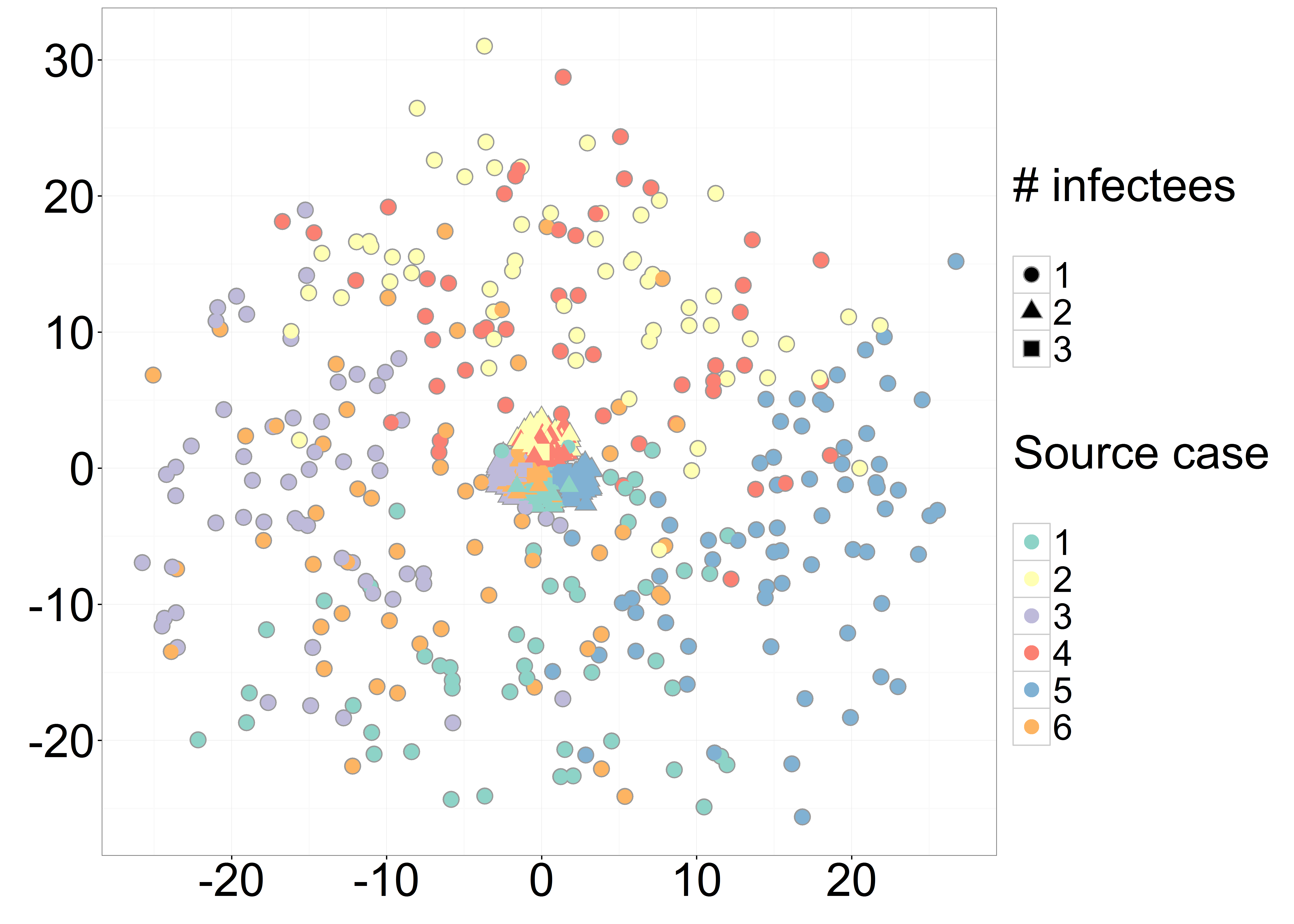}
\label{fig:scenario3_axes23}
}
\caption{Scenario 3: thirteen sampled cases, no unsampled cases, each infector infects $n$ cases, where $n$ is picked uniformly at random from $\{1,2,3\}$, per tree. Source case is picked uniformly at random from $\{1,\dots,6\}$.}
\label{fig:scenario3}
\end{figure}

\subsubsection{Scenario 4}

In our final scenario we analyse the impact of including unsampled cases in our transmission tree. 
We consider trees with eight sampled cases and a further $c$ unsampled cases, where $c$ is picked uniformly at random from $\{0,\dots,8\}$. 
Each infector infects $n$ cases, where $n$ is picked uniformly at random from $\{2,\dots,6\}$, until all cases have been infected (note that this means that not every infector will necessarily infect \emph{exactly} $n$ cases).
Figure~\ref{fig:scenario5} shows how various characteristics of the transmission trees are represented in the MDS plot. 
The first two axes group the trees by features which are correlated with tree shape / transmission dynamics: the mean number of infectees per infector (Figure~\ref{fig:scenario5_meaninfectees}) and the number of unsampled cases in the tree (Figure~\ref{fig:scenario5_numUnsamp}). 
These features are also strongly correlated with the mean of the tree vector $v|_S$ (which captures the depths of \emph{sampled} MRCIs).
As in Scenario 3, there is some grouping by source case (Figure~\ref{fig:scenario5_sourcecase}), particularly by \emph{sampled} source case, especially in the second and third axes (Figure~\ref{fig:scenario5_sourcecase_axes23}), where we have plotted the trees with unsampled source cases with low point opacity. 

\begin{figure}[h!tb]
\begin{center}
\subfloat[Colour: mean number of infectees per infector]{
\includegraphics[height=4cm]{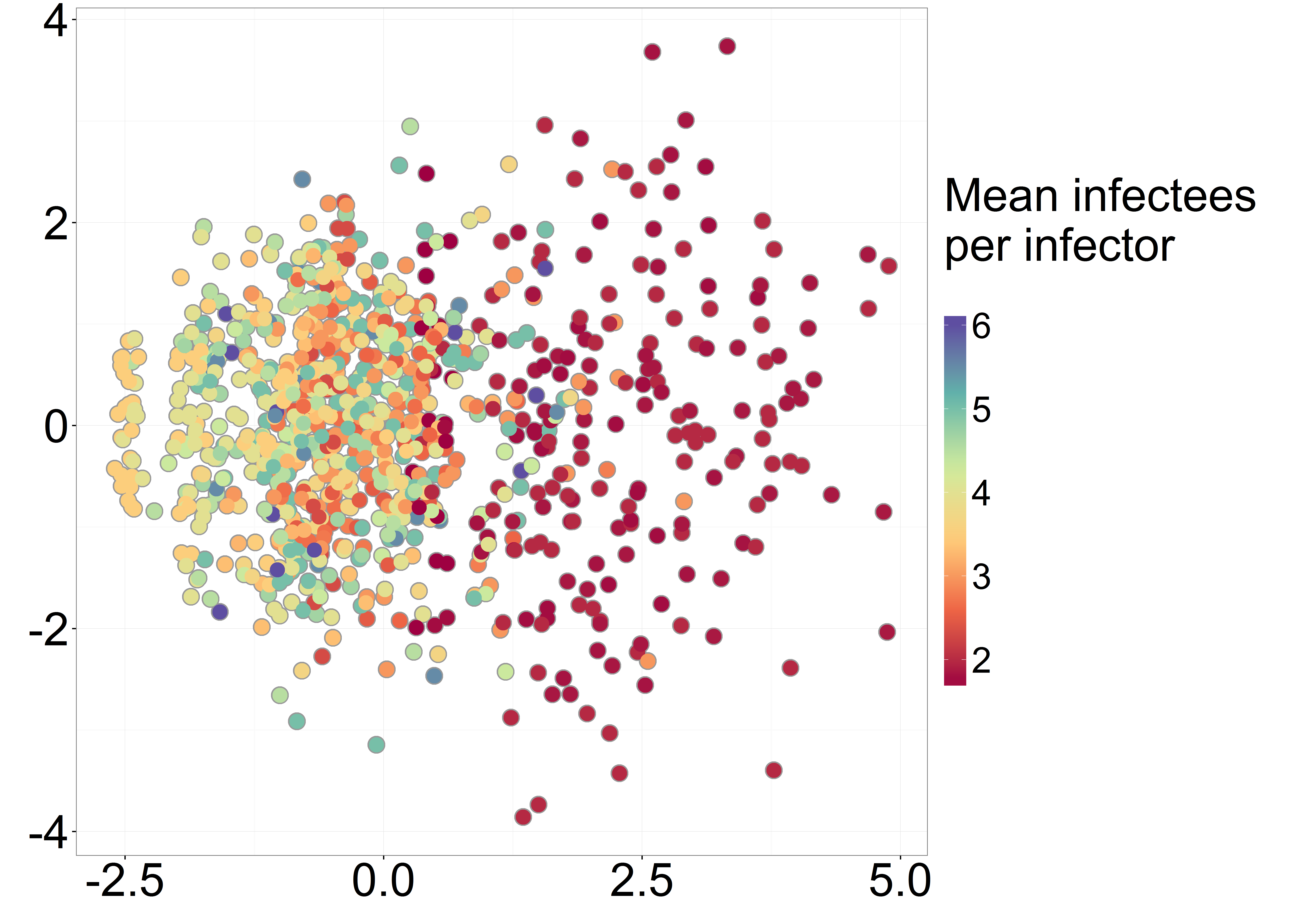}
\label{fig:scenario5_meaninfectees}
}
\hspace{1cm}
\subfloat[Colour: number of unsampled cases]{
\includegraphics[height=4cm]{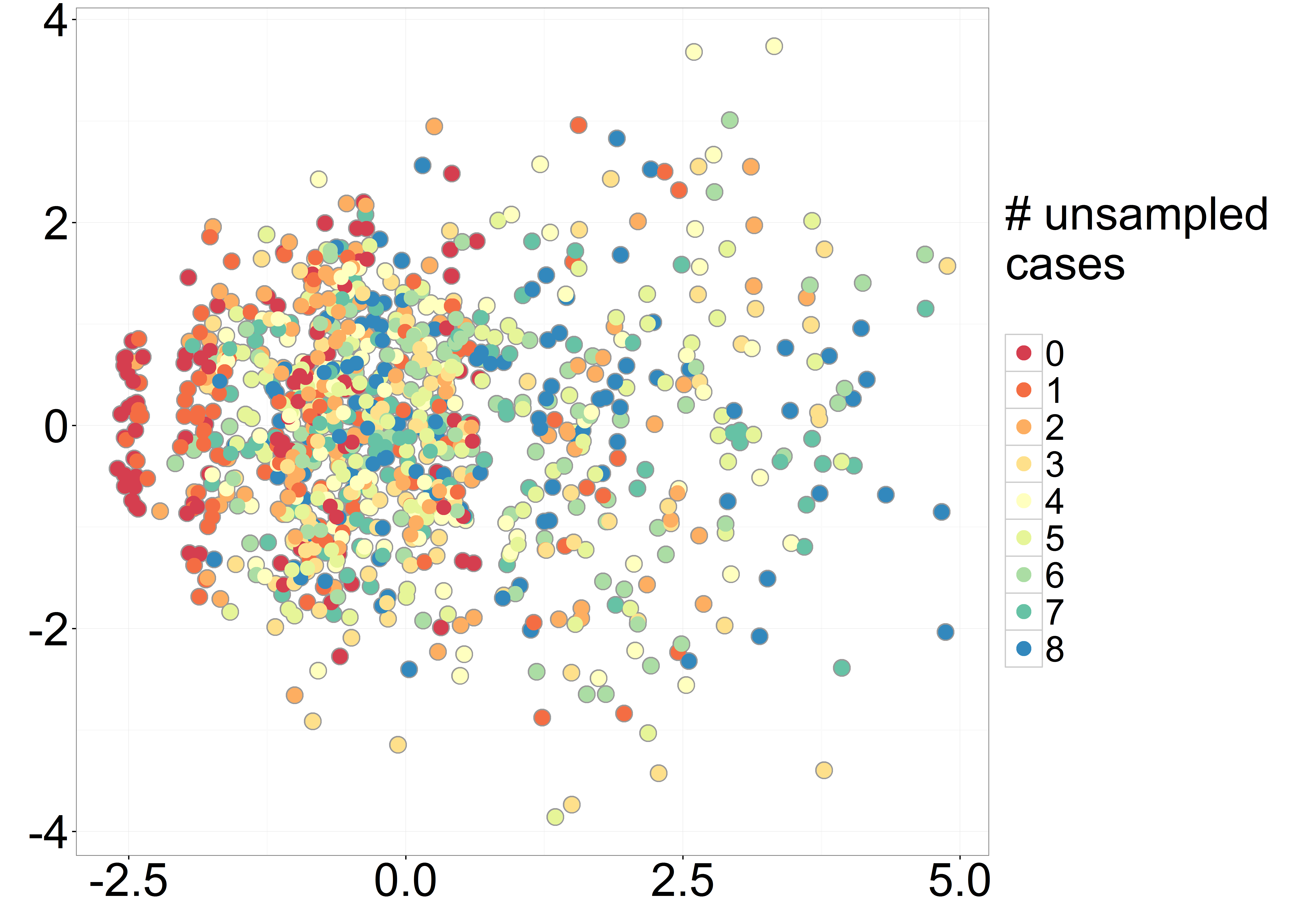}
\label{fig:scenario5_numUnsamp}
} \\
\subfloat[Colour: ID of source case (or $u$ for an unsampled source)]{
\includegraphics[height=4cm]{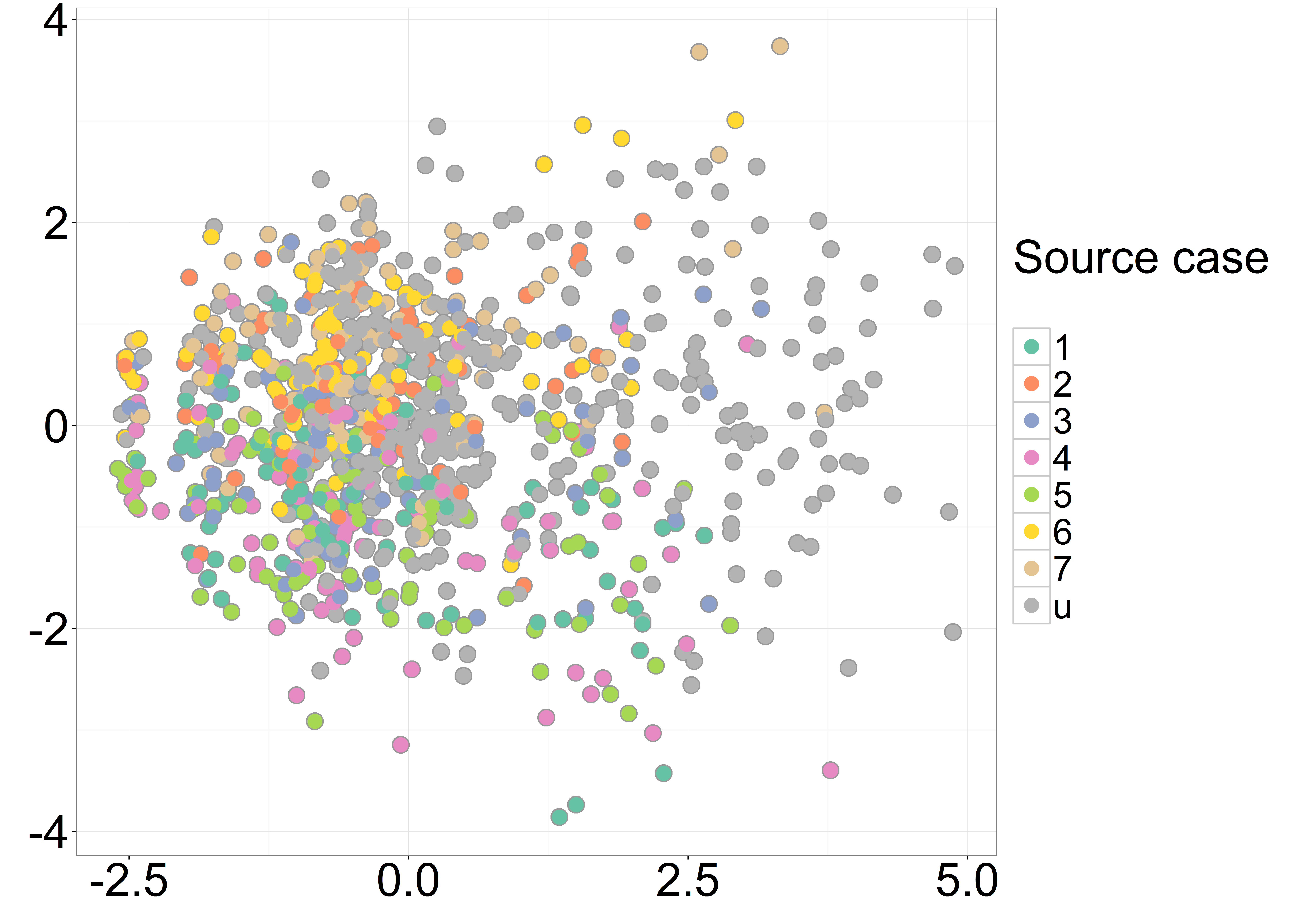}
\label{fig:scenario5_sourcecase}
}
\hspace{1cm}
\subfloat[Axes 2 and 3. Colour: ID of source case (or $u$ for an unsampled source, faded)]{
\includegraphics[height=4cm]{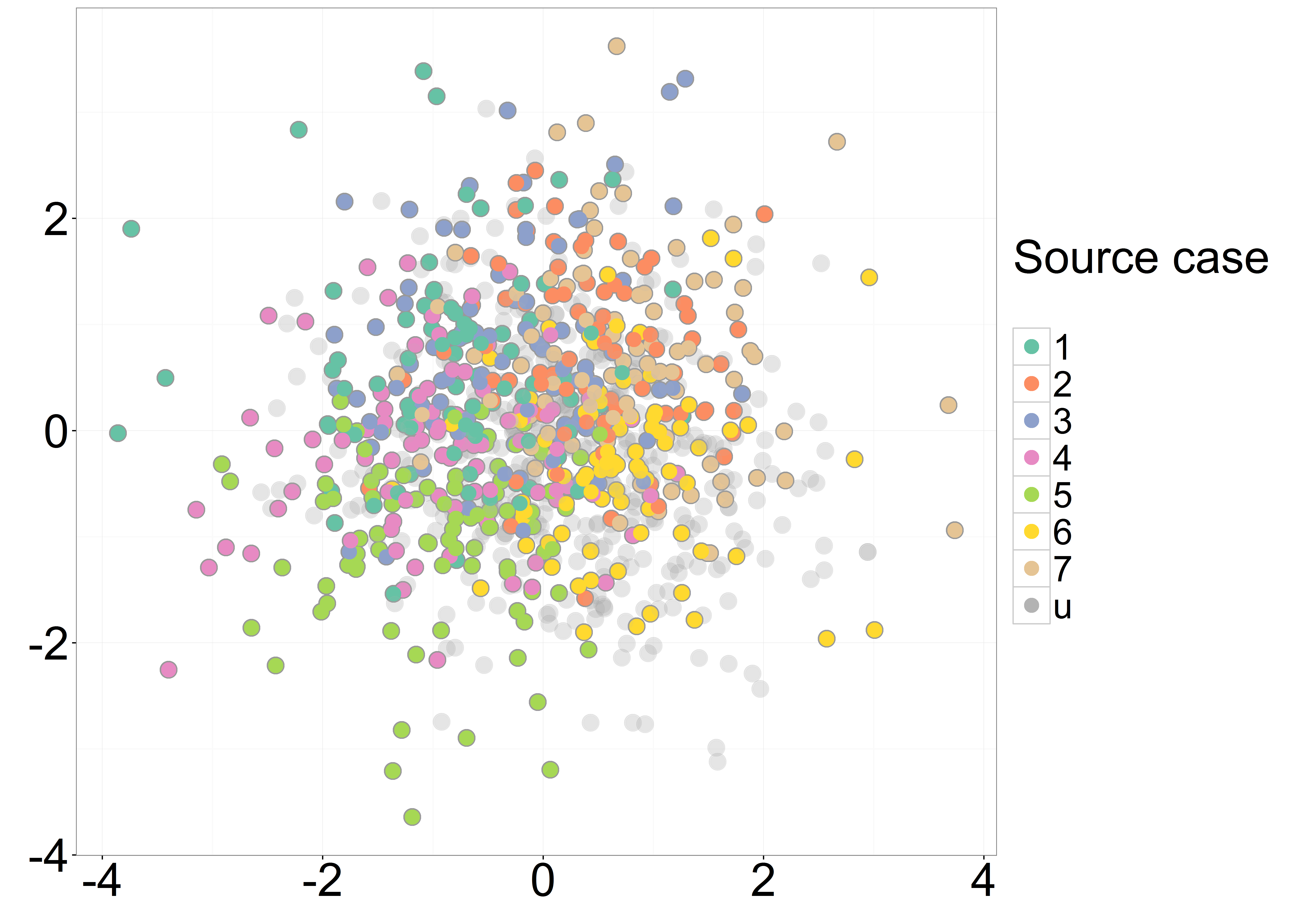}
\label{fig:scenario5_sourcecase_axes23}
}
\caption{MDS plots of tree-tree distances for trees from Scenario 4: eight sampled cases, up to eight further unsampled cases, each infector infects two to six cases. Colour is used to demonstrate how the trees are grouped according to various features. Axes 1 and 2 are plotted except where otherwise stated.}
\label{fig:scenario5}
\end{center}
\end{figure}

\subsection{Tuberculosis outbreak} 
We used the R package TransPhylo~\cite{Didelot2016-sw} to perform MCMC inference to reconstruct an outbreak of tuberculosis (TB) reported by Roetzer et al.~\cite{Roetzer2013-fc}.  The outbreak lasted from 1997 to 2010 during which epidemological data were collected such as information concerning previous exposure to known cases, residence status, sex, and age. TransPhylo is a Bayesian inference method to infer transmission trees using genomic data. TransPhylo's starting point is a timed phylogenetic tree, in which tips correspond to sampled cases and internal nodes correspond to inferred common ancestors; edge lengths are in units of time. The starting tree was inferred using the BEAST~\cite{Drummond2007} software as described in~\cite{Didelot2016-sw}. This tree is held fixed, and TransPhylo proceeds by overlaying transmission events on it, and computing the likelihood of the overall transmission process at each iteration. 

Here we use the metric we have presented to compare inferred transmission trees under different priors, and to explore convergence of the MCMC. The time between an individual becoming infected and infecting others is a major source of uncertainty in TB, as it has a long and variable latent period; this is in contrast to acute infections such as influenza in which the generation time is short and not highly variable (typically under 1-2 weeks). In any public health investigation it is difficult to determine how effectively and rapidly cases are identified. Accordingly, it is important to know how prior assumptions about these distributions affect outbreak reconstructions. The metric allows us to quantify and visualise this. 

We ran $100,000$ MCMC iterations with five different choices for the priors for the sampling and generation times. Some individuals were sampled for reasons other than their symptoms and as such the prior sampling distribution was chosen to be a gamma distribution~\cite{Didelot2016-sw}. Also a gamma distribution was used for the prior generation time distribution in order to reflect the variable disease progression of TB. We sampled $200$ random trees from the last $10000$ iterations of each of the five MCMC runs. We applied the metric to these trees and projected the distances into a two dimensional plot using MDS (Figure~\ref{fig:TB_MDS_plots}). In Figure~\ref{fig:TB_iteration}, we show the distances between the last $1000$ trees from one of the MCMC runs, each tree colored by its iteration number. This reflects how the MCMC moves through the tree space: it samples several times from an area and then hops to another, qualitatively illustrating good mixing.

\begin{figure}[htb]
\centering
\subfloat[Colour: iteration]{
\includegraphics[height=5cm]{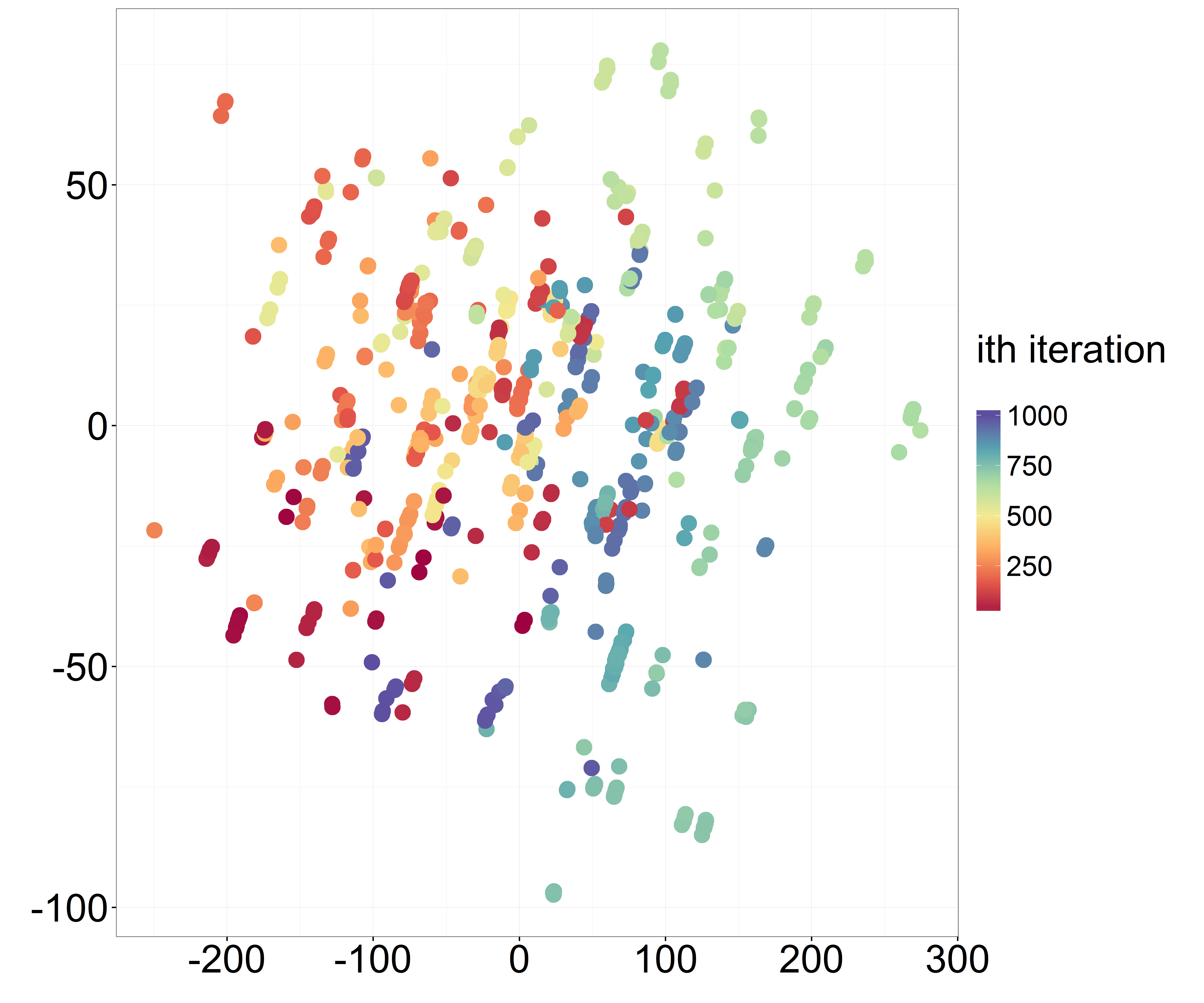}
\label{fig:TB_iteration}
 }
~~
\subfloat[Axes 1 and 2; Colour: mean gen. time; Shape: mean samp. time]{
\includegraphics[height=5cm]{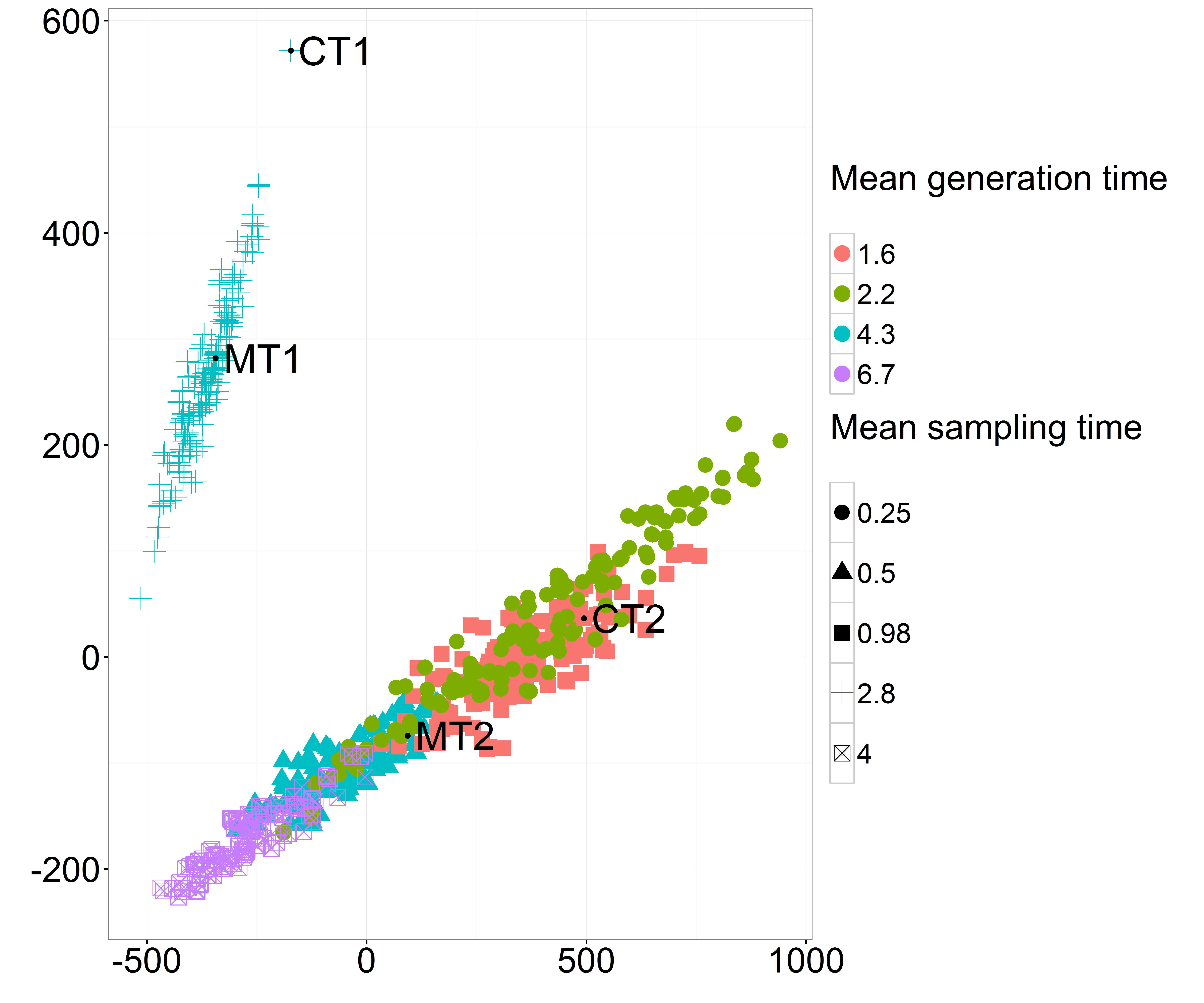}
\label{fig:TB_MDS_12}
} 
\\
\subfloat[Axes 2 and 3; Colour: mean gen. time; Shape: mean samp. time]{
\includegraphics[height=5cm]{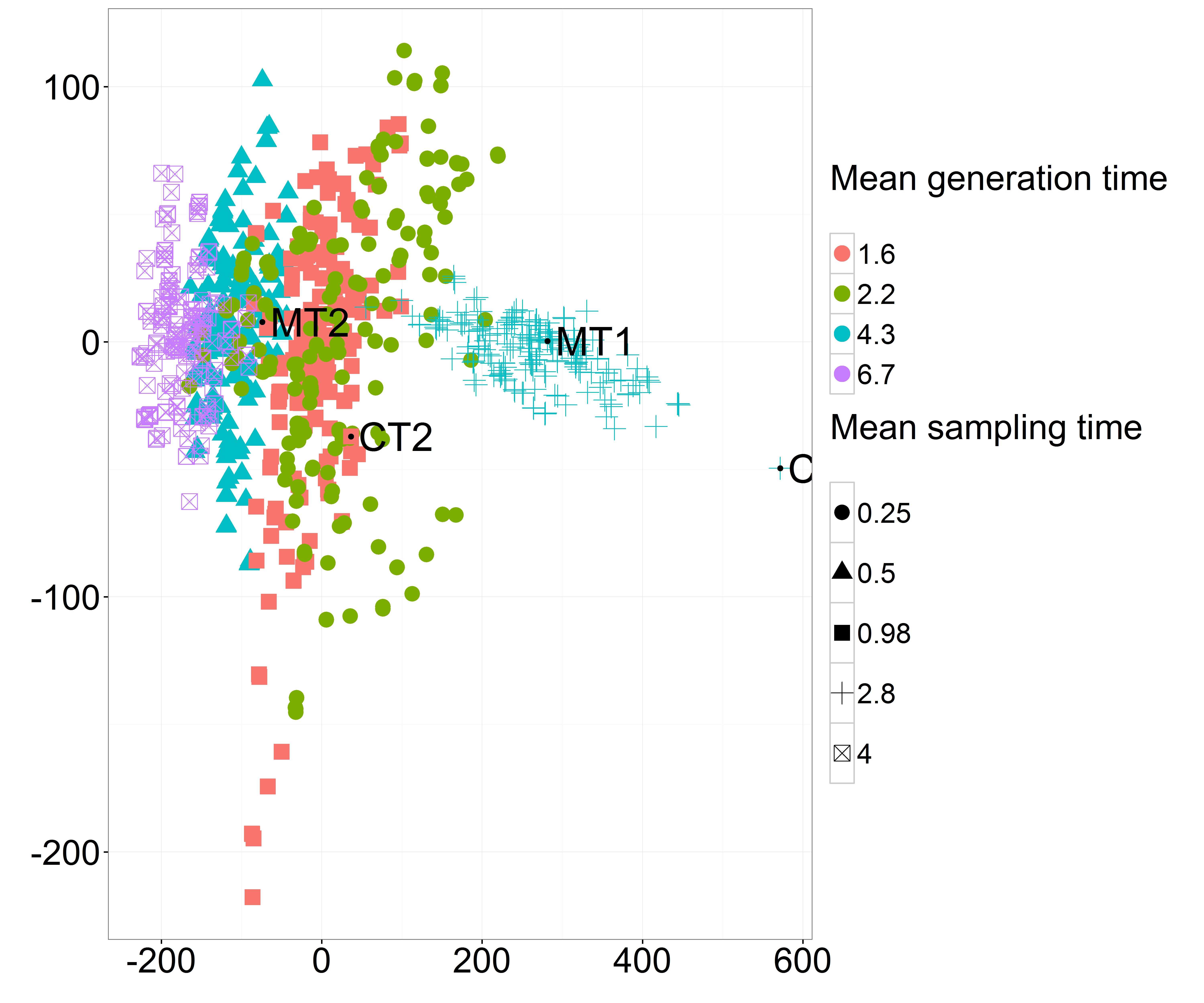}
\label{fig:TB_MDS_23}
}
~~
\caption{MDS plots of tree-tree distances for posterior transmission trees from the Hamburg TB outbreak~\cite{Roetzer2013-fc}. (a) Colour indicates iteration number in the MCMC chain. (b) Colour indicates mean prior generation time, shape indicates mean prior sampling time and the median trees of the two groups are labelled MT1 and MT2.}
\label{fig:TB_MDS_plots}
\end{figure}

Figure~\ref{fig:TB_MDS_plots} illustrates that there are distinct differences between the inferred trees depending on the priors.  Figures \ref{fig:TB_MDS_12} and \ref{fig:TB_MDS_23} show $1000$ trees, 200 from each of the five MCMC runs, on axes $1,2$ and $2,3$ respectively. Colors correspond to mean generation times and shape corresponds to mean sampling times. In Figure \ref{fig:TB_MDS_12}, there are two visually separated clusters of trees. It is not clear why the mean prior generation time of 4.3 years and sampling prior of 2.8 years should produce markedly different trees, as these are not extremal choices of the prior, but in practice it is useful to be able to visualise how unimodal a posterior (or set of trees from multiple posteriors under different priors) is. For the two obvious clusters (blue, and everything else, in the middle panel of Figure~\ref{fig:TB_MDS_plots}), we obtain both a median tree using our metric and a consensus tree using Transphylo's function consTTree which implements Edmond's algorithm. We refer to the smaller blue cluster as cluster 1 and the other as cluster 2. The points $MT1$, $MT2$ correspond to median trees for clusters 1 and 2 while $CT1$ and $CT2$ correspond to (Edmond's) consensus trees of these clusters. $CT1$ is visually separated from the rest of its cluster in the MDS plot, whereas the median trees sit centrally in their clusters. Consistent with this, the mean distances from MT1 and CT1 to trees in cluster 1 are 98 and 306 units respectively. Cluster 2 is larger and more dispersed, and the consensus tree is more central, but the mean distances between MT2 and CT2 and cluster 2's trees are 370 versus 474 units. In our metric the median trees are closer to the clusters they aim to summarise than the trees derived from Edmond's algorithm. The individual transmission trees are illustrated in Figure~\ref{fig:TB_trees}. 

\begin{figure}[htb]
\centering
\includegraphics[width=0.75\textwidth]{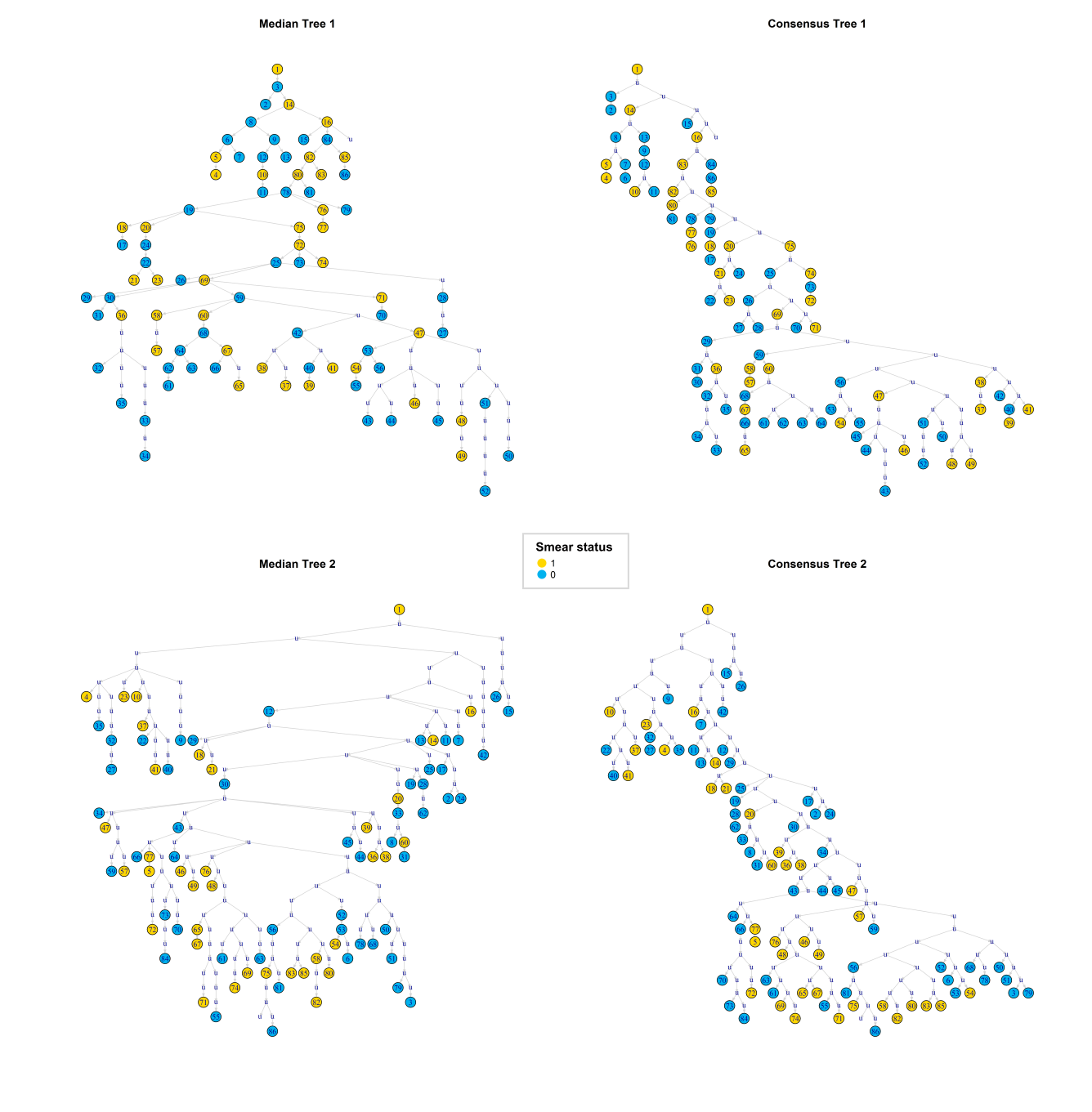}
\caption{Median and consensus trees from each of the two clusters, coloured according to the smear status of each sampled patient.}
\label{fig:TB_trees}
\end{figure}

Trees from the two main clusters have similar depths, and all identify case 1 as the source. Trees from within each cluster have strong similarity in the first few infections after the source case, but there are distinct differences between the clusters, with many individuals placed very differently. For example, note the positions of patients 83 and 85, who appear early in the transmission process in cluster 1 but at the end, with no infectees, in cluster 2. Overall, trees from cluster 2 have more unsampled cases (average 88) than cluster 1 (average 33). This is reflected in the median and consensus trees, with 38 and 60 unsampled cases in MT1 and CT1 versus 145 and 111 in MT2 and CT2 respectively. This is likely a result of the prior assumptions: shorter sampling and generation times (more in cluster 2) use higher numbers of unsampled cases to fill in transmission events along long branches of the fixed phylogenetic tree that is provided as input. 

We visualised the median and consensus trees using colour to indicate patients' TB smear status. The smear status refers to the result of a sputum smear microscopy test, which detects TB bacilli in patient sputum samples. Smear-positive individuals are believed to transmit TB more than smear-negative cases due to the higher numbers of bacilli present in the sputum~\cite{Singh2005-il}, but the smear test itself has limited sensitivity (as low as 50\%) \cite{Siddiqi2003-hy}. In our analysis, smear-positive individuals transmit more in trees MT1 and CT1 than in MT2 and CT2, largely due to the fact that MT2 and CT2 have a much higher fraction of transmission by unsampled cases.

\section{Discussion} 
We have introduced a metric, in the sense of a true distance function, on the set of transmission trees with labelled sampled cases along with unsampled cases (up to our notion of isomorphism). In the context of inferring transmission trees, this metric can aid in assessing convergence, posterior concordance and sensitivity to priors, and in comparing inference methods to each other. It emphasises the source case and the extent of shared transmission events in two trees. We applied the metric to random trees from simple simulated scenarios and found that it can separate trees according to their overall shape, the numbers of infectees per infector, and according to which case is the source. It allows for trees with unsampled cases, an advantage because health authorities rarely know about every case in an outbreak of an infectious disease. 

The metric is sensitive to the source case, and as such, it carries the limitation that trees with different source cases but otherwise similar transmission events may appear a higher distance from each other than intuition would suggest. In addition, while unsampled cases are possible, the metric is only a metric up to pruning of unsampled cases with no descendants, and up to relabelling of unsampled cases.  The way we treat unsampled cases could result in distances that do not always reflect intuition. For example, if one tree has long chains of unsampled cases but otherwise similar connectivity (ie A infects B, versus A infects B via a long chain of intermediate unsampled cases, and this occurs for many pairs of individuals), our metric will show a relatively large distance. If this is not desired in a specific application, the effect can be reduced by collapsing chains of unsampled cases before computing distances. 

The metric as it stands also does not take the timing of transmission events into account, equating for example a tree in which A infects B and then infects C two weeks later, with one in which A infects C and then infects B a year later (as both have A infecting both B and C). It would be straightforward, however, to modify the metric in either of two ways: (1) convert the transmission tree to a genealogical, binary, tree -- capturing pathogen lineages that branch at transmission events -- and then use a metric on those binary trees \cite{Robinson1979b,Billera2001,Kendall2016}, or (2) incorporate timing information in the lengths of branches in the framework we have presented here. In (2), we would construct a vector $w_S(T)$ whose entries were the \emph{time elapsed} between the infection of the MRCIs, rather than the \emph{depths} of the MRCIs, and then the time-sensitive metric could be defined as 
$$
d(T_1,T_2) = ||  \left(\epsilon v|_S(T_1) + (1-\epsilon) w|_S(T_1) \right) - \left(\epsilon v|_S(T_2) + (1-\epsilon) w|_S(T_2) \right) ||  \enspace .
$$ 
With $\epsilon > 0$ this would still be a metric on $\mathcal{T}$.

The metric can be used to aid in computing effective sample sizes for posterior collections of transmission trees. Effective sample sizes (ESS) are routinely used in phylogenetic inference, and should be adopted for inference of transmission trees as well.  Recently, Lanfear et al.~\cite{Lanfear2016} outlined approaches to use distances been phylogenetic tree topologies to compare MCMC runs and assess convergence and autocorrelation -- they used traces of distances between trees along the MCMC chains and a single `focal tree', and distances between trees in the chain sampled at different sampling intervals (`jump distances'). Lanfear et al. computed effective sample sizes by applying standard techniques to distances between posterior trees. The same approaches could be used to estimate effective sample sizes for MCMC chains inferring transmission trees, using the metric we have presented here. 

 The R functions required for the tree distances presented here are available in the \texttt{treescape} package~\cite{treescape}, version 1.10.17 onwards.
 
\section{Concluding remarks} 
Inferring transmission events from epidemiological, clinical and now genetic data is a challenging task, and an important one as understanding transmission is essential for designing the best approaches to control infections. Genomic data are noisy, and the underlying processes generating the true variation are stochastic. However, recent advances in sequencing technologies have led to widespread interest in using pathogen sequences to inform us about who infected whom.  There are now many Bayesian methods available for this inference task, each developed with specific goals and features in mind, and each tested on the authors' own data and  simulation scenario (with~\cite{Klinkenberg2016-gg} as one exception that includes tests on other authors' simulations). 

Understanding convergence, the effects of priors, and the structure of the posterior collections of transmission trees is not trivial.  As this field matures, comparing and benchmarking the performance of different methods will require the ability to quantify how close different approaches come to each other and to gold standard trees that experts agree are the best match to comprehensive data sources for an outbreak. We have developed a metric that can aid in these tasks, illustrated its performance and made it available to the community. 

\bibliographystyle{plain}
\bibliography{library,morerefs} 

\end{document}